\documentclass[twocolumn]{svjour3}          % twocolumn
\smartqed  % flush right qed marks, e.g. at end of proof
\usepackage{graphicx}
\usepackage{mathptmx}      % use Times fonts if available on your TeX %
\usepackage{amssymb}
\usepackage{dcolumn}% Align table columns on decimal point
\usepackage[utf8]{inputenc}
\usepackage{bm}% bold math
\newcommand{\be}{\begin{equation}}
\newcommand{\ee}{\end{equation}}
\journalname{Nonlinear Dinamics}

\begin{document}

%\title{ Stability of Fixed Points in Generalized Caputo Fractional Maps and the Fixed-Point -- Period Two Bifurcation is Nonlinear Maps}
\title{ Stability of Fixed Points in Generalized Fractional Maps of the Orders $0< \alpha <1$ }

%\subtitle{Do you have a subtitle?\\ If so, write it here}

\author{Mark Edelman}

\institute{Mark Edelman \at
              Department of Physics, Stern College at Yeshiva University,   245 Lexington Ave, New York, NY 10016, USA \\
Courant Institute of
Mathematical Sciences, New York University, 251 Mercer St., New York, NY
10012, USA
              \email{edelman@cims.nyu.edu}          
}

\date{Received: date / Accepted: date}
% The correct dates will be entered by the editor

\maketitle

\begin{abstract}
Caputo fractional (with power-law kernels) and fractional (delta) difference maps belong to a more widely defined class of generalized fractional maps, which are discrete convolutions with some power-law-like functions. 
The conditions of the asymptotic stability of the fixed points for maps of the orders $0< \alpha <1$ that are derived in this paper are narrower than the conditions of stability for the discrete convolution equations in general and wider than the well-known conditions of stability for the fractional difference maps. The derived stability conditions for the fractional standard and logistic maps coincide with the results previously observed in numerical simulations. 
%The rate of convergence of trajectories to stable fixed points in maps %with kernels asymptotically decaying as $n^{-\alpha}$, where $n$ is the %number of iterations, is also $n^{-\alpha}$.  
In nonlinear maps, one of the derived limits of the fixed-point stability coincides with the fixed-point - asymptotically period two bifurcation point.

\keywords{Fractional maps \ fixed points \ bifurcations}
\PACS{ 05.45.Pq \and 45.10.Hj }
\subclass{47H99 \ 60G99 \and 34A99 \and 39A70 }
\end{abstract}

\section{Introduction}
\label{sec:1}

Fractional maps as the exact solutions of fractional differential equations (of the orders $\alpha>1$) of periodically kicked systems were introduced in \cite{TarZas2008} in a way similar to the way in which the universal map is introduced in regular dynamics. It turned out that these maps are discrete convolutions with power-law functions and are maps with memory whose history is briefly discussed in \cite{TarZas2008} (for more discussions and references to maps with memory see \cite{Chaos2015} and reviews \cite{HBV2,MErev2014}). The results of \cite{TarZas2008} were extended in \cite{MEChaos2013} to include the maps of the orders $ 0 \le \alpha \le 1$. Fractional difference/sum operators were introduced in \cite{MillerRoss,GrayZhang}. The authors of \cite{WuFall} proposed to define fractional maps (which we will call fractional difference maps) as solutions of fractional difference equations (see \cite{AE1,AE2,Anastas,CLZ,Abdel}).

Investigations of the discrete convolution equations, which are analogs of the Volterra integrodifferential equations, were conducted independently from the investigations of fractional maps and were mostly related to the modeling in population dynamics (see, e.g., Chapter 6 in \cite{ElaydiBook}). 

The problem of the linear asymptotic stability of the fixed points of the discrete convolution equations with arbitrary kernels was investigated in 
\cite{Elaydi1993} (for more reviews and publications on the subject see \cite{ElaydiBook,Elaydi1996,Elaydi2007,Elaydi2009,ConvStab2017}). Semi-analytic and numerical results for the case of fractional (power-law kernel) standard and logistic maps were obtained in \cite{ME2,Chaos} (see also reviews \cite{MErev2014,HBV4}).  Later, the problem of the linear asymptotic stability was considered for the case of fractional difference equations (falling factorial kernel)
\cite{AbuSaris2013,Cermak2015,Mozyrska2015} and, recently, for the case of the complex order fractional difference equations \cite{StabComplex}.
In the general case of the discrete convolution equations, the conditions of stability were formulated as the requirements on zeros of a characteristic equation. In the case of the fractional difference equations (maps), the authors of the above cited publications found the explicit analytic conditions of stability. 

Not only the conditions of stability, but also the rates of convergence to the fixed points have been investigated in the discrete convolution equations. It has been proven that if the zero solution of a linear Volterra difference equation is uniformly asymptotically stable, then it is exponentially stable if and only if the kernel decays exponentially (see Theorem 5 in \cite{Elaydi1996}). Semi-analytically and numerically   
the power-law convergence (which for $0<\alpha<1$ is $n^{-\alpha}$, where $n$ is the number of iterations and $\alpha$ is the fractional order of a map) and divergence of trajectories was analyzed in \cite{ME2} (Section 1 and Fig. 1), \cite{ME3} (see Fig. 1), \cite{ME4} (Section 3.2), 
\cite{MErev2014} (Section 3.3.3 and figures in it), and \cite{Chaos2014} (see Fig. 5 for $0< \alpha<1$)  for the fractional and fractional difference standard map. 
For the fractional logistic map, the power-law convergence was shown in 
\cite{MEChaos2013} (see Fig.~6), and for the fractional and fractional difference cases with $0< \alpha<1$, the $n^{-\alpha}$--convergence of trajectories was demonstrated in \cite{ME11} (see Fig. 6.8).
In \cite{Cermak2015} the authors proved that in Caputo fractional difference maps convergence to asymptotically stable fixed points is $O(n^{-\alpha})$ and in \cite{Anh} the authors strictly proved that convergence obeys the power law $\Delta x \sim n^{-\alpha}$, where $0<\alpha<1$ is the order of a fractional difference map. This implies that the correctly calculated Lyapunov exponents in converging to the fixed point Caputo fractional 
difference maps should be equal to zero. However, starting from the first paper where the Lyapunov exponent was used to analyze stability in fractional difference maps \cite{WBLya} (see Figs. 1 and 2), multiple authors reported the negative Lyapunov exponents in various fractional difference maps (see, e.g., Fig.~2 in \cite{DD} or Fig.~7 in \cite{GG}).    

In the following sections, we recall the basic definitions and results related to the problem of stability in the generalized fractional map (Section~\ref{sec:2}), generalize the results known for the problem of stability in fractional difference maps to obtain the criteria of stability of the fixed points in the generalized fractional map (Section~\ref{sec:3}), consider application of these criteria to generalized, fractional, and fractional difference logistic and standard maps (Section~\ref{sec:4}), and make the concluding remarks in  (Section~\ref{sec:5}). 

In this paper we consider only the Caputo fractional and fractional (delta) difference maps and the generalized (as defined in Section~\ref{sec:2}) fractional map. For the sake of brevity, in what follows we will omit the word Caputo.

\section{Preliminaries}
\label{sec:2}

First, we will recall how the generalized fractional map is defined.

When $0<\alpha<1$, the generalized universal $\alpha$-family of maps 
can be written in the form (see \cite{Helman} and Sections~2~and~3 in \cite{ME14}):
\begin{eqnarray}
x_{n}= x_0 
-\sum^{n-1}_{k=0} G^0(x_k) U_\alpha(n-k).
\label{FrUUMapN}
\end{eqnarray} 
In this formula $G^0(x)=h^\alpha G_K(x)/\Gamma(\alpha)$, $x_0$ is the initial condition, $h$ is the time step of the map, $\alpha$ is the order of the map, $G_K(x)$ is a nonlinear function depending on the parameter $K$, $U_\alpha(n)=0$ for $n \le 0$, and $U_\alpha(n) \in \mathbb{D}^0(\mathbb{N}_1)$. The space $\mathbb{D}^i(\mathbb{N}_1)$ is defined as (see \cite{Helman})
{\setlength\arraycolsep{0.5pt}
\begin{eqnarray}
&&\mathbb{D}^i(\mathbb{N}_1)\ \ = \ \ \{f: |\sum^{\infty}_{k=1}\Delta^if(k)|>N, \ \ \forall N, \ \ N \in \mathbb{N}, 
\nonumber \\
&&\sum^{\infty}_{k=1}|\Delta^{i+1}f(k)|=C, \ \ C \in \mathbb{R}_+\},
\label{DefForm}
\end{eqnarray}
}
where $\Delta$ is a forward difference operator defined as
\begin{equation}
\Delta f(n)= f(n+1)-f(n).
\label{Delta}
\end{equation}
The linearized near fixed point $x_f$ map is
\begin{eqnarray}
\delta_{n}= \delta_0 
+\lambda\sum^{n-1}_{k=0} \delta_k U_\alpha(n-k),
\label{Lin0GFMap}
\end{eqnarray}
where  
\begin{eqnarray}
\lambda=-\frac{d G^0(x)}{dx}\Big|_{x=x_f}.
\label{lambda}
\end{eqnarray}
Without loss of generality, we will assume $\delta_0=1$ (equivalent to the following substitution: $\tilde{\delta}=\delta/\delta_0$).
The unilateral $Z$-transform is used to analyze the local stability near fixed points:  
\begin{eqnarray}
Y(z)=\mathbb{Z}[\delta_n](z)= \sum^{\infty}_{k=0} \delta_kz^{-k},
\label{Z}
\end{eqnarray}
where $z\in C$.
$Z$-transformed Eq.~(\ref{Lin0GFMap}) with $\delta_0=1$  is
\begin{eqnarray}
Y(z)=\frac{1}{(1-z^{-1})(1-\lambda\mathbb{Z}[U_\alpha(n)](z))},
\label{Zeq}
\end{eqnarray}
where
\begin{eqnarray}
\mathbb{Z}[U_\alpha(n)](z)= \sum^{\infty}_{k=0}U(k)z^{-k}.
\label{ZU}
\end{eqnarray}

In fractional maps 
\begin{eqnarray}
U_{\alpha}(n)=n^{\alpha-1}, \ \ \ \  U_{\alpha}(1)=1
\label{UnFr}
\end{eqnarray} 
and in fractional difference maps
{\setlength\arraycolsep{0.5pt}
\begin{eqnarray}
&&U_{\alpha}(n)=(n+\alpha-2)^{(\alpha-1)} 
, \ \ \ \  \nonumber \\  
&&U_{\alpha}(1)=(\alpha-1)^{(\alpha-1)}=\Gamma(\alpha).
\label{UnFrDif}
\end{eqnarray} 
}
The definition of the falling factorial $t^{(\alpha)}$ is
\begin{equation}
t^{(\alpha)} =\frac{\Gamma(t+1)}{\Gamma(t+1-\alpha)}, \ \ t\ne -1, -2, -3.
...
\label{FrFacN}
\end{equation}
The falling factorial is asymptotically a power function:
\begin{equation}
\lim_{t \rightarrow
  \infty}\frac{\Gamma(t+1)}{\Gamma(t+1-\alpha)t^{\alpha}}=1,  
\ \ \ \alpha \in  \mathbb{R}.
\label{GammaLimitN}
\end{equation}
The $h$-falling factorial $t^{(\alpha)}_h$ is defined as
\begin{eqnarray}
&&t^{(\alpha)}_h =h^{\alpha}\frac{\Gamma(\frac{t}{h}+1)}{\Gamma(\frac{t}{h}+1-\alpha)}= h^{\alpha}\Bigl(\frac{t}{h}\Bigr)^{(\alpha)},
\nonumber \\
&&\frac{t}{h} \ne -1, -2, -3,
....
\label{hFrFacN}
\end{eqnarray}
Following \cite{Cermak2015,Mozyrska2015}, we notice that in the fractional difference case 
\begin{equation}
\frac{U_\alpha(n)}{\Gamma(\alpha)}=
\left( 
\begin{array}{c}
n+\alpha-2 \\ n-1
\end{array} \right)=\tilde{\phi}_\alpha(n-1),
\label{FrDifUbc}
\end{equation}
where
\begin{equation}
\tilde{\phi}_\alpha(n)=
\left( 
\begin{array}{c}
n+\alpha-1 \\ n
\end{array} \right)=
(-1)^n
\left( 
\begin{array}{c}
-\alpha \\ n
\end{array} \right)
\label{fi_alpha},
\end{equation}
\begin{equation}
\mathbb{Z}[\tilde{\phi}_\alpha(n)](z)=
\Bigl(\frac{z}{z-1}\Bigr)^\alpha,
\label{phi_alpha}
\end{equation}
and 
\begin{equation}
\mathbb{Z}[U_\alpha(n)](z)=
\frac{\Gamma(\alpha)}{z}\Bigl(\frac{z}{z-1}\Bigr)^\alpha.
\label{FrDifZU}
\end{equation}

\section{Linear stability of the generalized fractional map of the order $0<\alpha<1$}
\label{sec:3}

If we assume that, as in fractional and fractional difference maps, $U_\alpha(n)>0$ for $n \in \mathbb{N}_1$, then the following generalization of Lemma~3.1, proven in \cite{AbuSaris2013} for the fractional difference maps, is true: 
\begin{lemma}\label{L1} 
If $U_\alpha(n)>0$ for $n \in \mathbb{N}_1$, then a fixed point $x_f$ of the map Eq.~(\ref{FrUUMapN}) is linearly unstable when $\frac{dG^0(x)}{dx}\Big|_{x=x_f}<0$.
\end{lemma}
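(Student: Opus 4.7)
The plan is to run a direct induction on the iterated linearized equation and show that $\delta_n$ grows without bound. First I would translate the hypothesis on $G^0$ into a statement on the eigenvalue $\lambda$: by the defining relation $\lambda=-\,dG^0/dx\bigm|_{x_f}$ in (\ref{lambda}), the condition $dG^0/dx\bigm|_{x_f}<0$ is equivalent to $\lambda>0$. Combined with the standing assumption that $U_\alpha(n)>0$ for every $n\in\mathbb{N}_1$, this means that in the recursion (\ref{Lin0GFMap}) the map acts with a strictly positive coefficient on every previous iterate.

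Next I would argue by induction that $\delta_n\ge 1$ for all $n\ge 0$, starting from the normalization $\delta_0=1$. Suppose $\delta_k\ge 1$ for $k=0,1,\dots,n-1$; then, since $\lambda>0$ and $U_\alpha(n-k)>0$, equation (\ref{Lin0GFMap}) yields
\begin{equation}
\delta_n \;=\; 1+\lambda\sum_{k=0}^{n-1}\delta_k\,U_\alpha(n-k)\;\ge\;1+\lambda\sum_{j=1}^{n}U_\alpha(j),
\end{equation}
after the change of summation index $j=n-k$. This closes the induction and simultaneously supplies an explicit lower bound for $\delta_n$ in terms of the partial sums of $U_\alpha$.

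The final step is to invoke the membership $U_\alpha\in\mathbb{D}^0(\mathbb{N}_1)$. Because $\Delta^0 f=f$, the first clause of the definition (\ref{DefForm}) states that $\bigl|\sum_{k=1}^{\infty}U_\alpha(k)\bigr|>N$ for every $N\in\mathbb{N}$; with $U_\alpha(k)>0$ this forces $\sum_{j=1}^{n}U_\alpha(j)\to+\infty$ as $n\to\infty$. Therefore $\delta_n\to+\infty$, so the linearized orbit escapes to infinity and $x_f$ is linearly unstable.

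I do not anticipate a real obstacle here. The only subtle point is making sure the divergence of the partial sums of $U_\alpha$ is actually part of the standing assumptions on the kernel; this is secured by reading $\mathbb{D}^0(\mathbb{N}_1)$ in (\ref{DefForm}) with $i=0$, which is precisely an unboundedness condition on $\sum U_\alpha(k)$ and rules out the pathological possibility that $\sum U_\alpha(j)$ converges while $U_\alpha(j)>0$. Once that is noted, the argument is a one-line induction plus a lower-bound estimate.
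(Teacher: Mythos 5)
Your argument is correct and is essentially the paper's own proof: the paper likewise normalizes $\delta_0=1$, notes $\delta_k>1$, bounds $\delta_n$ from below by $1-\frac{dG^0}{dx}\big|_{x_f}\sum_{k=1}^{n}U_\alpha(k)$, and concludes from $U_\alpha\in\mathbb{D}^0(\mathbb{N}_1)$ that this diverges. You merely spell out the induction and the sign translation $\lambda>0$ more explicitly than the paper does.
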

\begin{proof}
Assuming that $\delta_0=1$, from Eqs.~(\ref{Lin0GFMap})~and~(\ref{lambda}) follows that $\delta_k>1$ and
\begin{eqnarray}
&&\delta_{n}= 1 
-\frac{d G^0(x)}{dx}\Big|_{x=x_f}\sum^{n-1}_{k=0} \delta_k U_\alpha(n-k)>
\nonumber \\
&&1-\frac{dG^0(x)}{dx}\Big|_{x=x_f}\sum^{n}_{k=1}U_\alpha(k).
\label{Lin0GFMapProof}
\end{eqnarray}
Because $U_\alpha(n) \in \mathbb{D}^0(\mathbb{N}_1)$, the last sum tends to infinity when $n \rightarrow \infty$. This ends the proof.
\end{proof}
Application of Lemma~\ref{L1} to the generalized fractional logistic map ($G_K(x)=x-Kx(1-x)$) gives the result known for the fractional and fractional difference logistic maps that the fixed point $x=0$ is unstable when $K>1$ and the fixed point $x=(K-1)/K$ is unstable when $K<1$. For the fixed points $x=\pi m$, $m \in \mathbb{Z}$ of the generalized fractional standard map ($G_K(x)=K\sin(x)$) the result of the Lemma is instability when $(-1)^mK<0$.  

To analyze the stability problem for a system described by Eq.~(\ref{FrUUMapN}), we rewrite Eq.~(\ref{Lin0GFMap}) as 
\begin{eqnarray}
\delta_{n+1}= \delta_{n} 
+\lambda\sum^{n}_{k=0} \delta_k \tilde{U}^1_\alpha(n-k),
\label{Lin0GFMapN}
\end{eqnarray}
where 
{\setlength\arraycolsep{0.5pt} 
\begin{eqnarray}
&&U^{m-1}_\alpha(n)= \Delta^{m-1}U^0_\alpha(n-m+1), \  \  U_\alpha(n)=U^0_\alpha(n) \  \ {\rm and}
\nonumber \\ 
&&\tilde{U}^1_\alpha(k)= U^1_\alpha(k+1).
\label{UmT}
\end{eqnarray}
}
Z-transform of Eq.~(\ref{Lin0GFMapN}) is  
\begin{eqnarray}
&&Y(z)=\frac{z}{z-1-\lambda \mathbb{Z}[\tilde{U}^1_\alpha(n)](z)} 
=\frac{z}{z-1-\lambda z\mathbb{Z}[U^1_\alpha(n)](z)}
\nonumber \\ 
&&=\frac{z}{z-1-(1-z^{-1})\lambda z\mathbb{Z}[U_\alpha(n)](z)}.
\label{ZeqN}
\end{eqnarray}
Because $U_\alpha(n) \in \mathbb{D}^0(\mathbb{N}_1)$, $U^1_\alpha(n) \in l_1$ and its norm is defined as $|| U^1_\alpha(n) ||
=\sum^{\infty}_{k=1}|U^1_\alpha(k)|$. Then, according to Theorem~6.14 of \cite{ElaydiBook}, $\mathbb{Z}[U^1_\alpha(n)](z)$ is analytic and $|\mathbb{Z}[U^1_\alpha(n)](z)| \ge || U^1_\alpha(n) || $ for $|z| \ge 1$. According to Theorem 6.17 of \cite{ElaydiBook}, the zero solution of the system defined by  Eq.~(\ref{Lin0GFMapN}) (and, correspondingly, of the linearized original Eq.~(\ref{FrUUMapN})) is uniformly asymptotically stable if and only if $\omega(z)=z-1-\lambda z \mathbb{Z}[U^1_\alpha(n)](z)$ 
has no zeros for all $|z| \ge 1$. z is a zero of $\omega(z)$ if and only if
\begin{eqnarray}
&&\lambda=\frac{z-1}{z}\frac{1}{\mathbb{Z}[U^1_\alpha(n)](z)}
\nonumber \\ 
&&=\frac{z-1}{z}\frac{1}{\sum^{\infty}_{k=1}[U_\alpha(k)-U_\alpha(k-1)]z^{-k}}
\nonumber \\ 
&&=\frac{z-1}{\sum^{\infty}_{k=0}[U_\alpha(k+1)-U_\alpha(k)]z^{-k}}.
\label{ZeqNlambda}
\end{eqnarray}
The span of the real values of $\lambda$ for which the zero solution is stable corresponds to the values of Eq.~(\ref{ZeqNlambda}) from $z=-1$ to $z=1$. $\lambda=0$ when $z=1$ and when $z=-1$
\begin{eqnarray}
&&\lambda(-1)=\frac{-2}{U_\alpha(1)+ \sum^{\infty}_{k=1}[U_\alpha(k+1)-U_\alpha(k)](-1)^{k}}
\nonumber \\ 
&&=\frac{1}{\sum^{\infty}_{k=1}U_\alpha(k)(-1)^{k}}=-\frac{1}{S_2},
\label{ZeqNlambdaNeg1}
\end{eqnarray}
where we used the definition $S_2=\sum^{\infty}_{k=1}U_\alpha(k)(-1)^{k+1}$ introduced in \cite{ME14}. 
Finally, we may formulate the following condition of stability of the fixed point for the linearized map Eq.~(\ref{Lin0GFMap}):
\begin{eqnarray}
&&-\frac{1}{S_2}<\lambda<0,
\label{ZeqNlambdaStability}
\end{eqnarray}
or, equivalently,
\begin{eqnarray}
&&0<\frac{d G(x)}{dx}\Big|_{x=x_f}<\frac{\Gamma(\alpha)}{S_2 h^\alpha}.
\label{ZeqNlambdaStabilityN}
\end{eqnarray}
This result may be formulated as the following theorem:
\begin{theorem}\label{T1} 
The map Eq.~(\ref{FrUUMapN}), where $G^0(x)=h^\alpha G_K(x)/\Gamma(\alpha)$, $x_0$ is the initial condition, $h$ is the time step of the map, $\alpha$ is the order of the map, $G_K(x)$ is a nonlinear function depending on the parameter $K$, $U_\alpha(n)=0$ for $n \le 0$, and $\Delta U_\alpha(n) \in l_1$ is asymptotically stable if and only if the conditions Eq.~(\ref{ZeqNlambdaStabilityN}), where $S_2=\sum^{\infty}_{k=1}U_\alpha(k)(-1)^{k+1}$,  are satisfied.
\end{theorem}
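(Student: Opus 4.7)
The plan mirrors the derivation immediately preceding the theorem. First, I would rewrite the linearized convolution Eq.~(\ref{Lin0GFMap}) in a one-step recursive form Eq.~(\ref{Lin0GFMapN}) by subtracting $\delta_n$ from $\delta_{n+1}$; this replaces the original kernel by $\tilde{U}^1_\alpha$, and via the identity $\tilde{U}^1_\alpha(k)=\Delta U_\alpha(k)$ the hypothesis $\Delta U_\alpha \in l_1$ is exactly what guarantees $\tilde{U}^1_\alpha \in l_1$. Applying the unilateral $Z$-transform then produces Eq.~(\ref{ZeqN}), which expresses $Y(z)$ as the reciprocal of the characteristic function $\omega(z)=z-1-\lambda z\,\mathbb{Z}[U^1_\alpha(n)](z)$.

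Second, I would invoke the two standard results from \cite{ElaydiBook}: Theorem~6.14 gives analyticity of $\mathbb{Z}[U^1_\alpha(n)](z)$ on $\{|z|\ge 1\}$, and Theorem~6.17 states that the zero solution of the Volterra-type recurrence Eq.~(\ref{Lin0GFMapN}) is uniformly asymptotically stable if and only if $\omega(z)$ has no zeros in that region. Equivalently, stability holds iff $\lambda$ does not lie in the image of $\{|z|\ge 1\}$ under the map obtained by solving $\omega(z)=0$ for $\lambda$, namely Eq.~(\ref{ZeqNlambda}).

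Third, I would specialize to real $\lambda$. The image of the unit circle $|z|=1$ under Eq.~(\ref{ZeqNlambda}) is a closed curve in $\mathbb{C}$ whose \emph{real} intersections at real $z$ occur only at $z=1$ and $z=-1$, giving $\lambda=0$ and $\lambda=-1/S_2$ respectively, the latter via the telescoping reduction in Eq.~(\ref{ZeqNlambdaNeg1}). A continuity argument together with Lemma~\ref{L1} (which already rules out $\lambda>0$) identifies the open interval $(-1/S_2,0)$ as the real stability interval with its endpoints on the boundary, yielding Eq.~(\ref{ZeqNlambdaStability}). Substituting $\lambda=-dG^0(x)/dx|_{x=x_f}$ together with $G^0=h^\alpha G_K/\Gamma(\alpha)$ then produces the equivalent formulation Eq.~(\ref{ZeqNlambdaStabilityN}).

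The main obstacle I expect lies in the third step: the curve $\{\lambda(z):|z|=1\}$ could in principle cross the real axis at non-real values of $z$ as well, in which case the real stability interval would be a proper subset of $(-1/S_2,0)$. For the kernels actually of interest --- power-law and falling-factorial --- positivity and monotonicity of $\Delta U_\alpha$ should make a direct geometric check feasible, but in the fully general setting of the theorem one likely needs a supplementary sign or monotonicity hypothesis on the differences of $U_\alpha$ to guarantee that no such extra real crossing occurs.
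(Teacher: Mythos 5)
Your proposal follows the paper's proof essentially step for step: the rewriting of the linearized map as Eq.~(\ref{Lin0GFMapN}), the observation that $\Delta U_\alpha\in l_1$ puts $U^1_\alpha$ in $l_1$, the appeal to Theorems~6.14 and 6.17 of \cite{ElaydiBook}, the inversion of $\omega(z)=0$ for $\lambda$ in Eq.~(\ref{ZeqNlambda}), and the evaluation at $z=\pm 1$ yielding the interval $(-1/S_2,0)$. The obstacle you flag in your final paragraph is genuine but is not resolved in the paper either --- the paper simply asserts that the real stability span corresponds to the values of Eq.~(\ref{ZeqNlambda}) from $z=-1$ to $z=1$, without ruling out additional real crossings of the curve $\{\lambda(z):|z|=1\}$ at non-real $z$, so your proposed supplementary hypothesis on the differences of $U_\alpha$ would, if anything, make the argument more complete than the published one.
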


Let us recall the equations obtained in \cite{ME14} for the calculation of the period two ($T=2$) points $x_1$ and $x_2$:
\begin{eqnarray}
&&x_2-x_1=\frac{S_2h^\alpha}{\Gamma(\alpha)}[G(x_2)-G(x_1)],
\nonumber \\ 
&&G(x_1)+G(x_2)=0.
\label{BifConditionO}
\end{eqnarray}
Near the first bifurcation point $x_f$, where $x_1\approx x_2$, the first of these equations can be written as
\begin{eqnarray}
&&x_2-x_1=\frac{S_2h^\alpha}{\Gamma(\alpha)} \frac{d G(x)}{dx}\Big|_{x=x_f} (x_2-x_1).
\label{BifCondition}
\end{eqnarray}
This equation gives the following condition for the first bifurcation 
\begin{eqnarray}
&& \frac{d G(x)}{dx}\Big|_{x=x_f} = \frac{\Gamma(\alpha)}{S_2h^\alpha}
\label{BifConditionN}
\end{eqnarray}
consistent with Eq.~(\ref{ZeqNlambdaStabilityN}).

In the case of the fractional difference maps, $S_2$ can be calculated explicitly:
%\section{$T=2$ bifurcation point for the fractional difference maps}
%\label{sec:2}
{\setlength\arraycolsep{0.5pt}   
\begin{eqnarray} 
&&S_{2}=\sum^{\infty}_{k=0}\Bigl[
U_{\alpha} (2k+1) - U_{\alpha} (2k+2)\Bigr]
\nonumber \\
&&= \sum^{\infty}_{k=0}\Bigl[(2k+\alpha-1)^{(\alpha-1)}- (2k+\alpha)^{(\alpha-1)} \Bigr]
\nonumber \\
&&= \sum^{\infty}_{k=0}
\Bigl[\frac{\Gamma(2k+\alpha)}{ \Gamma(2k+1)}- \frac{\Gamma(2k+\alpha+1)}{ \Gamma(2k+2)} \Bigr]
\nonumber \\
&&=(1-\alpha)\sum^{\infty}_{k=0}
\frac{\Gamma(2k+\alpha)}{\Gamma(2k+2)}
=-\Gamma(\alpha) \sum^{\infty}_{k=0}
\frac{\Gamma(2k+\alpha)}{\Gamma(\alpha-1)\Gamma(2k+2)}
\nonumber \\
&&=-\Gamma(\alpha) \sum^{\infty}_{k=0}
\left( \begin{array}{c}
2k+\alpha-1 \\ 2k+1
\end{array} \right)
= \Gamma(\alpha) \sum^{\infty}_{k=0}
\left( \begin{array}{c}
1-\alpha \\ 2k+1
\end{array} \right)
\nonumber \\
&&=\Gamma(\alpha) \sum^{\infty}_{k=0}
\Bigl[
\left( 
\begin{array}{c}
-\alpha \\ 2k+1
\end{array} \right)
+
\left( 
\begin{array}{c}
-\alpha \\ 2k
\end{array} \right)
\Bigr]
\nonumber \\
&&=\Gamma(\alpha) \sum^{\infty}_{k=0}
\left( 
\begin{array}{c}
-\alpha \\ k
\end{array} \right) =\Gamma(\alpha) 2^{-\alpha}.
\label{S2}
\end{eqnarray}
}
Here we used the following well-known and easily verifiable identities:
{\setlength\arraycolsep{0.5pt}   
\begin{eqnarray} 
&&\left( 
\begin{array}{c}
\mu \\ \eta
\end{array} \right)
=\frac{\Gamma(\mu+1)}{\Gamma(\mu-\eta+1)\Gamma(\eta+1)},
\label{I1}
\end{eqnarray}
}
{\setlength\arraycolsep{0.5pt}   
\begin{eqnarray} 
&&\left( 
\begin{array}{c}
n+\alpha-1 \\ n
\end{array} \right)
= (-1)^n
\left( 
\begin{array}{c}
-\alpha \\ n
\end{array} \right),
\label{I2}
\end{eqnarray}
} 
{\setlength\arraycolsep{0.5pt}   
\begin{eqnarray} 
&&\left( 
\begin{array}{c}
\mu \\ n
\end{array} \right)+
\left( 
\begin{array}{c}
\mu \\ n+1
\end{array} \right)
= 
\left( 
\begin{array}{c}
\mu+1 \\ n+1
\end{array} \right),
\label{I3}
\end{eqnarray}
}
and
{\setlength\arraycolsep{0.5pt}   
\begin{eqnarray} 
&&(x+y)^{\alpha}
= \sum^{\infty}_{k=0}
\left( 
\begin{array}{c}
\alpha \\ k
\end{array} \right) x^{\alpha-k}y^k.
\label{I4}
\end{eqnarray}
} 
Then, in fractional difference maps the stability of the fixed points
is defined by the inequality
\begin{eqnarray}
&&0<\frac{d G(x)}{dx}\Big|_{x=x_f}<\Bigl(\frac{2}{h}\Bigr)^\alpha
\label{ZeqNlambdaStabilityNFD}
\end{eqnarray}
and the bifurcation point is defined by the equality
\begin{eqnarray}
&&\frac{d G(x)}{dx}\Big|_{x=x_f}=\Bigl(\frac{2}{h}\Bigr)^\alpha
\label{BifPointNFD}
\end{eqnarray}

%The proof that in the case of fractional difference maps
%Eq.~(\ref{Lin0GFMapN}) the convergence to the zero fixed point obeys the %power law, $\delta_n \sim n^{-\alpha}$ as 
%$n \rightarrow \infty$ (see Theorem~1.4 in \cite{Cermak2015}, which uses %results of \cite{Elaydi1996}), is based on the fact that in fractional %difference maps the kernel asymptotically converges to zero, %$U_{\alpha}(n) \sim n^{-\alpha}$ as $n \rightarrow \infty$ (see 
%$Eqs.~(\ref{UnFrDif})-(\ref{GammaLimitN})). The same proof stands and can %be repeated step by step for any map with the same asymptotic behavior %of its kernel. In fractional maps Eq.~(\ref{UnFr}), %$\tilde{U}^1_\alpha(n) \sim n^{-\alpha}$ as $n \rightarrow \infty$ and %the convergence to the fixed point also follows the same power law. 

\section{Generalized fractional logistic and standard maps, $0<\alpha<1$}
\label{sec:4}

Fractional logistic (quadratic nonlinearity) and standard (harmonic nonlinearity) maps were among the first introduced fractional and fractional difference maps \cite{MEChaos2013,WuFall,ME2,Chaos2014,ME9,WBLog}. They also were used in applications related to communications \cite{Com1,Com2}, aging \cite{ME11}, and encryption \cite{Encr}.

\subsection{Generalized fractional logistic map, $0<\alpha<1$}
\label{sec:4.1}

In the generalized fractional logistic map 
\begin{eqnarray}
G_K(x)=x-Kx(1-x).
\label{LM}
\end{eqnarray} 
This map has two fixed points $x_{f1}=0$ and $x_{f2}=(K-1)/K$.

The linearized version of this map around the fixed point at the origin ($x=0$) is 
\begin{eqnarray}
\delta_{n}= \delta_0 
+\frac{h^\alpha}{\Gamma(\alpha)}(K-1)\sum^{n-1}_{k=0} \delta_k U_\alpha(n-k).
\label{Lin0GFLMap}
\end{eqnarray} 
The linearized version of this map around the fixed point $x=(K-1)/K$ is 
\begin{eqnarray}
\delta_{n}= \delta_0 
-\frac{h^\alpha}{\Gamma(\alpha)}(K-1)\sum^{n-1}_{k=0} \delta_k    U_\alpha (n-k).
\label{Lin0GFLMapN}
\end{eqnarray}
Eq.~(\ref{ZeqNlambdaStabilityN}) applied to the fixed point $x_{f1}$ gives the following condition of stability:
\begin{eqnarray}
&&1-\frac{\Gamma(\alpha)}{S_2 h^\alpha}<K<1
\label{LogStab0}
\end{eqnarray}
and the condition of stability of the fixed point $x_{f2}$ is
\begin{eqnarray}
&&1<K<\frac{\Gamma(\alpha)}{S_2 h^\alpha}+1.
\label{LogStabX2}
\end{eqnarray}
The bifurcation point is 
\begin{eqnarray}
&&K=\frac{\Gamma(\alpha)}{S_2 h^\alpha}+1.
\label{LogFirstBif}
\end{eqnarray}
In the case the fractional difference logistic map $x_{f1}$ is stable when
\begin{eqnarray}
&&1-\Bigl(\frac{2}{h}\Bigr)^\alpha <K<1,
\label{LogStab0N}
\end{eqnarray}
$x_{f2}$ is stable when
\begin{eqnarray}
&&1<K<\Bigl(\frac{2}{h}\Bigr)^\alpha +1,
\label{LogStabX2N}
\end{eqnarray}
and the bifurcation point is 
\begin{eqnarray}
&&K=\Bigl(\frac{2}{h}\Bigr)^\alpha+1.
\label{LogFirstBifn}
\end{eqnarray}
\begin{figure}[!t]
\includegraphics[width=0.45 \textwidth]{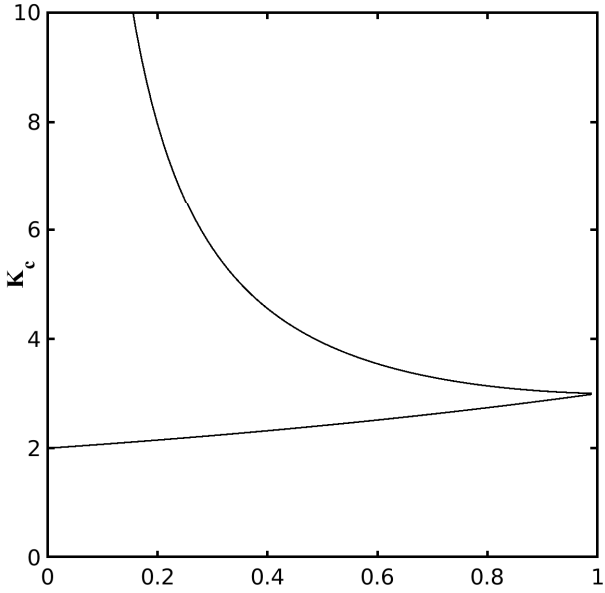}
\vspace{-0.25cm}
\caption{The fixed-point -- T=2-point bifurcations for fractional (upper curve) and fractional difference (lower curve) Caputo logistic maps. }
\label{fig1}
\end{figure}

\subsection{Generalized fractional standard (circle with zero driving phase) map, $0<\alpha<1$}
\label{sec:4.2}

In the generalized fractional standard map 
\begin{eqnarray}
G_K(x)=K\sin(x).
\label{SM}
\end{eqnarray} 
The fixed points are $x_{fn}=\pi n$ and
\begin{eqnarray}
&&\frac{d G(x)}{dx}\Big|_{x=x_{fn}}=(-1)^nK.
\label{SMfixDer}
\end{eqnarray}
The condition of stability is 
\begin{eqnarray}
&&0<(-1)^nK<\frac{\Gamma(\alpha)}{S_2 h^\alpha}.
\label{SMStabX2}
\end{eqnarray} 
The bifurcation points are  
\begin{eqnarray}
&&K=\pm\frac{\Gamma(\alpha)}{S_2 h^\alpha}.
\label{SMBifX2}
\end{eqnarray} 
\begin{figure}[!t]
\includegraphics[width=0.45 \textwidth]{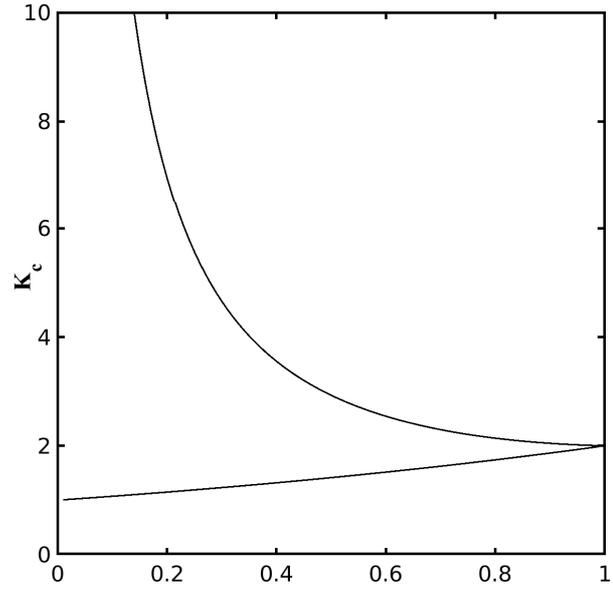}
\vspace{-0.25cm}
\caption{ The fixed-point -- T=2-point bifurcations for fractional (upper curve) and fractional difference (lower curve) Caputo standard maps.}
\label{fig2}
\end{figure}

In the case of the fractional difference standard map
the condition of stability is 
\begin{eqnarray}
&&0<(-1)^nK<\Bigl(\frac{2}{h}\Bigr)^\alpha.
\label{SMStabX2n}
\end{eqnarray} 
The bifurcation points are  
\begin{eqnarray}
&&K=\pm\Bigl(\frac{2}{h}\Bigr)^\alpha.
\label{SMBifX2n}
\end{eqnarray} 

The graphs of the fixed-point -- T=2-point bifurcation in the case of the unit time step $h=1$ Figures~\ref{fig1}~and~2 coincide with the previously reported (see, e.g., \cite{HBV4}) results. In the case of the standard maps, we considered the zero fixed-point. The bifurcation points in the fractional difference maps (lower curves) were calculated using Eqs.~(\ref{LogFirstBifn}) and~(\ref{SMBifX2n}) . The bifurcation points in the fractional maps (upper curves) were calculated using Eqs.~(\ref{LogFirstBif})~and~(\ref{SMBifX2}). The values of $S_2$ were calculated using the algorithm described in \cite{ME14} which is based on the calculation of the Riemann $\zeta$-function.

\section{Conclusion}
\label{sec:5}

The inequality derived in this paper, Eq.~(\ref{ZeqNlambdaStabilityN}), defines the necessary and sufficient conditions for the local stability of fixed points of the nonlinear generalized fractional maps and the upper limit in this inequality defines the fixed-point -- T=2 bifurcation point. The only condition used to derive Eq.~(\ref{ZeqNlambdaStabilityN}) was the following requirement on the kernel $U_\alpha (n)$ of the map 
\begin{eqnarray}
 \Delta U_\alpha (n) \in l_1.
\label{Inl1}
\end{eqnarray} 
This implies that any map, Eq.~(\ref{FrUUMapN}), with the kernel satisfying Eq.~(\ref{Inl1}) will have the same conditions of stability of its fixed points.

%The proof of the power-law convergence, as $n^{-\alpha}$, to the fixed %point, in the case of the $n^{-\alpha}$-convergence of the kernel, is %identical to the proof presented in \cite{Cermak2015} for the fractional %difference maps. It is important to reevaluate the papers in which the %negative Lyapunov exponent has been reported for the convergence to the %fixed point in fractional difference maps.

The requirement that $\sum^{\infty}_{k=1}U_\alpha(k)=\pm \infty$, which was important for the finding of the periodic points in \cite{Helman,ME14}, was not used in this paper.

\begin{acknowledgements}
The author acknowledges support from Yeshiva University's 2021--2022
Faculty Research Fund and expresses his gratitude to the administration of Courant Institute of Mathematical
Sciences at NYU for the opportunity to perform computations at Courant and to Virginia Donnelly for technical help.
\end{acknowledgements}

\section*{Conflict of interest}
The author declares that he has no conflict of interest.

%\begin{thebibliography}{}


\begin{thebibliography}{999}
%%%%%%%%%%%%%%%%%  Socioeconomic %%%%%%%%%%%%%%%%%%%%%%%%

\bibitem{TarZas2008}
Tarasov, V.E., Zaslavsky, G.M.: Fractional equations of kicked systems
and discrete maps. J. Phys. A \textbf{41}, 435101 (2008)

\bibitem{Chaos2015} Edelman, M.: 
On the fractional Eulerian numbers and equivalence of 
maps with long  term power-law memory (integral Volterra 
equations of the second kind) to
Gr$\ddot{u}$nwald-Letnikov fractional difference (differential) equations.
Chaos \textbf{25}, 073103 (2015)

\bibitem{HBV2} Edelman, M.:
Maps with power-law memory: direct introduction and Eulerian numbers,
fractional maps, and fractional difference maps. 
In: A. Kochubei and Yu. Luchko (eds.), Handbook
of Fractional Calculus With Applications, Volume 2, Theory. 
47--64 De Gruyter, Berlin (2019)

\bibitem{MErev2014}
Edelman, M.: Fractional maps as maps with power-law memory. In: V.
Afraimovich, A. C. J. Luo, and X. Fu (eds.), Nonlinear Dynamics and
Complexity; Series: Nonlinear Systems and Complexity. 79--120, Springer,
New York, (2014)

\bibitem{MEChaos2013} Edelman, M.: 
Universal Fractional Map and Cascade of Bifurcations Type
Attractors.
Chaos \textbf{23}, 033127 (2013)

\bibitem{MillerRoss} Miller, K.S., Ross, B.: Fractional Difference Calculus. In: H. M. Srivastava and S. Owa, (eds.): Univalent Functions, Fractional Calculus, and Their Applications. 139--151, Ellis Howard, Chichester, (1989)

\bibitem{GrayZhang} Gray, H.L., Zhang, N.F.: On a new definition of the fractional difference. Mathematics of Computation \textbf{50}, 513--529 (1988)

\bibitem{WuFall}  Wu, G.-C., Baleanu, D., Zeng, S.-D.: Discrete chaos in fractional sine and standard maps. Phys. Lett. A \textbf{378}, 484--487  (2014)

\bibitem{AE1}
Atici, F.M., Eloe, P.W.: Discrete fractional calculus with the nabla operator. Electronic Journal of Qualitative Theory of Differential Equations Spec. Ed. I. \textbf{2009}, No. 3, 1--12 (2009)

\bibitem{AE2}
Atici, F.M., Eloe, P.W.: Initial value problems in discrete fractional calculus. Proc. Am. Math. Soc. \textbf{137}, 981--989 (2009)

\bibitem{Anastas}
Anastassiou, G.A.: Discrete Fractional Calculus and Inequalities.
http://arxiv.org/abs/0911.3370 (2009)

\bibitem{CLZ}
Chen, F., Luo, X., Zhou, Y.: Existence Results for Nonlinear Fractional
Difference Equation. Adv. Differ. Eq. \textbf{2011}, 713201 (2011)

\bibitem{Abdel}
Abdeljawad, T., Baleanu, D.: Fractional differences and
integration by parts. J. Comp. Anal. App. \textbf{13}, 574--582 (2011)

\bibitem{ElaydiBook}
Elaydi, S.: An Introduction to Difference Equations. 3rd Ed., Undergraduate Texts in Mathematics, Springer, New York (2005)


\bibitem{Elaydi1993}
Elaydi, S.: Stability of Volterra difference equations of convolution type. Proceedings of the Special Program at Nankai Institute of Mathematics (ed. Liao Shan-Tao et al.), World Scientific, Singapore, (1993), pp. 66--73

\bibitem{Elaydi1996}
Elaydi, S., Murakami, S.: Asymptotic stability versus exponential stability in linear Volterra difference equations of convolution type. J. Difference Equ. Appl. \textbf{2}, 401--410 (1996)

\bibitem{Elaydi2007}
Elaydi, S., Messina, E., Vecchio., A: A note on the asymptotic stability of linear Volterra difference equations of convolution type. J. Difference Equ. Appl., \textbf{13}, 1079--1084 (2007)

\bibitem{Elaydi2009}
Elaydi, S.: Stability and asymptoticity of Volterra difference equations: A progress report. J. Comp. Appl. Math., \textbf{228}, 504--513 (2009)

\bibitem{ConvStab2017}
Oquendo, H.P., Barbosa, J.R.R., Pacheco, P.S.:
On the Stability of Volterra Difference Equations of Convolution Type.
Tema, \textbf{18}, 337--349 (2017) 



\bibitem{ME2} Edelman, M.,  Tarasov, V.E.: 
Fractional standard map.
Phys. Lett. A \textbf{374}, 279--285 (2009)


\bibitem{Chaos} Edelman, M.:
Universal Fractional Map and Cascade of
  Bifurcations Type Attractors.
Chaos \textbf{23}, 033127 (2013)

\bibitem{HBV4} Edelman, M.:
Dynamics of nonlinear systems with power-law memory.
In:  V.~E. Tarasov (editor), 
Handbook of Fractional Calculus with Applications,
Volume 4, Applications in Physics.
103--132 De Gruyter, Berlin (2019)

\bibitem{AbuSaris2013}
Abu-Saris, R., Al-Mdallal, Q.: On the asymptotic stability
of linear system of fractional-order difference equations. Fract.
Calc. Appl. Anal. \textbf{16}, 613--629 (2013)

\bibitem{Cermak2015}
\v{C}erm\'{a}k, J. Gy\H{o}ri, I., Nechv\'{a}tal, L.: On explicit stability conditions for a linear fractional difference system, Fract. Calc. and Appl. Anal. \textbf{18}, 651--672 (2015)

\bibitem{Mozyrska2015}
Mozyrska, D., Wyrwas, M.: The z-transform method and delta type fractional difference operators, Discrete Dynamics in Nature and Society
\textbf{2015}, 852734 (2015)

\bibitem{StabComplex}
Bhalekar, S., Gade, P.M., Joshi, D.:
Stability and dynamics of complex order fractional difference equations.
Chaos, Solitons and Fractals \textbf{158}, 112063 (2022)

\bibitem{ME3} Edelman, M.: 
Fractional Standard Map: Riemann-Liouville vs. Caputo.
Commun. Nonlin. Sci. Numer. Simul. \textbf{16}, 4573--4580 (2011)

\bibitem{ME4} Edelman, M. and Taieb, L.A.:   
New types of solutions of non-linear fractional differential
equations. In:  Almeida, A., Castro, L.,   Speck F.-O. (eds.)
 Advances in Harmonic Analysis and Operator Theory;   
Series: Operator Theory: Advances and Applications.
\textbf{229}, 139--155 Springer, Basel (2013)


\bibitem {Chaos2014} Edelman, M.:
Caputo standard $\alpha$-family of maps: 
Fractional difference vs. fractional.
Chaos \textbf{24}, 023137 (2014)

\bibitem{ME11} Edelman, M.: Evolution of Systems with Power-Law Memory: Do We Have to Die? (Dedicated to the Memory of Valentin Afraimovich). In Skiadas C.H. and Skiadas C. (eds.) Demography of Population Health, Aging and Health Expenditures. 65--85, Springer, eBook (2020)

\bibitem{Anh} Anh, P.T., Babiarz, A., Czornik, A., Niezabitowski, M., Siegmund S.: Asymptotic properties of discrete linear fractional equations. Bulletin of the Polish Academy of Sciences, Technical Sciences, \textbf{67}, 749--759 (2019)

\bibitem{WBLya} Wu, G., Baleanu, D.: Jacobian matrix algorithm for Lyapunov exponents of the discrete fractional maps. Commun.
Nonlinear Sci. Numer. Simul. \textbf{22}, 95--100 (2015)

\bibitem{DD} Deshpande, A., Daftardar-Gejji, V.: Chaos in discrete fractional difference equations. Pramana \textbf{87}, 1--10 (2016)

\bibitem{GG}
Gasri, A., Khennaoui, A.-A., Ouannas, A., Grassi, G., Iatropoulos, A., Moysis, L., Volos, C.:
A New Fractional-Order Map with Infinite Number of Equilibria
and Its Encryption Application. 
Complexity \textbf{2022}, 3592422 (2022)

\bibitem{Helman}
Edelman, M., Helman, A.B.: Asymptotic cycles in fractional maps of arbitrary positive orders. Fract. Calc. Appl. Anal. (2022). https://doi.org/10.1007/s13540-021-00008-w

\bibitem{ME14} Edelman, M.:
Cycles in asymptotically stable and chaotic fractional maps.
Nonlinear Dynamics \textbf{104}, 2829--2841 (2021)

\bibitem{ME9} Edelman, M.:
Fractional Maps and Fractional Attractors. Part II: Fractional
Difference $\alpha$-Families of Maps.
Discontinuity, Nonlinearity, and Complexity \textbf{4}, 391--402 (2015)

\bibitem{WBLog} Wu, G.-C., Baleanu, D.:, Discrete fractional logistic map and its chaos. Nonlin. Dyn. \textbf{75}, 283--287 (2014)

\bibitem{Com1}
Petkeviciute-Gerlach, D, Timofejeva, I., Ragulskis, M.: Clocking convergence of the fractional difference logistic map. Nonlin. Dyn. \textbf{100}, 3925--3935 (2020)

\bibitem{Com2}
Petkeviciute-Gerlach, D., Smidtaite, R., Ragulskis, M.: Intermittent bursting in the fractional difference logistic map of matrices. Int. J. of Bif. and Chaos \textbf{32}, 2230002 (2022)

\bibitem{Encr}
Bai, Y.-R., Baleanu, D., Wu, G.-C.: A novel shuffling technique based on fractional chaotic maps. Optik \textbf{168},  553--562 (2018)






\end{thebibliography}
\end{document}